\documentclass[letter]{ieice}
\usepackage[dvipdfmx]{graphicx,xcolor}
\usepackage[fleqn]{amsmath}
\usepackage{newtxtext}
\usepackage[varg]{newtxmath}

\usepackage{mdwlist}
\usepackage{amssymb} 

\usepackage{amsthm}
\usepackage{array}
\usepackage{enumerate}
\usepackage{enumitem}
\usepackage{enumitem}
\setenumerate[1]{itemsep=0pt,partopsep=0pt,parsep=\parskip,topsep=5pt}
\setitemize[1]{itemsep=0pt,partopsep=0pt,parsep=\parskip,topsep=5pt}
\setdescription{itemsep=0pt,partopsep=0pt,parsep=\parskip,topsep=5pt}

\usepackage{longtable}
\usepackage{graphicx}
\usepackage{galois}
\usepackage[CJKbookmarks=true,colorlinks,linkcolor=black]{hyperref}
\usepackage[justification=centering]{caption}
\usepackage[square,sort,comma,numbers]{natbib}
\let\biblatexRN\RN
\protected\def\RN{\ifmmode\bcR^{\bcN}\else\expandafter\biblatexRN\fi}


\makeatletter
\g@addto@macro\normalsize{%
    \setlength\abovedisplayskip{1pt}
    \setlength\belowdisplayskip{1pt}
    \setlength\abovedisplayshortskip{1pt}
    \setlength\belowdisplayshortskip{1pt}
}

\newcommand{\Z}{\mathbb{Z}}
\newcommand{\PZ}{\mathbb{Z}^{+}}
\newcommand{\F}{\mathbb{F}}
\newcommand{\N}{\mathbb{N}}

\DeclareMathOperator{\Ima}{Im}
\makeatother
\setlist[enumerate]{label*=\arabic*.}

\theoremstyle{definition}

\theoremstyle{plain}
\newtheorem{thms}{Theorem}
\newtheorem{lemma}{Lemma}
\newtheorem{prop}{Proposition}
\newtheorem{conc}{Corollary}

\theoremstyle{remark}
\newtheorem{remark}{Remark}

\setcounter{page}{1}

\field{A}
\title[A Note on Two Constructions of ZDB functions]{A Note on Two Constructions of Zero-Difference Balanced Functions}
\titlenote{The work of was supported in part by the National Science Foundation of China (Grant No. 61502113, 61602125, 61702124) and the Guangxi Provincial National Science Foundation (Grant No. 2016GXNSFBA380153), and the China Postdoctoral Science Foundation (Grant No. 2018M633041).}

\authorlist{%
 \authorentry[tpu01yzx@gmail.com]{Zongxiang Yi}{m}{GZHU}\MembershipNumber{111111}
 \authorentry[yuyuyin@163.com]{Yuyin Yu}{n}{GZHU}
 \authorentry[ctang@gzhu.edu.cn]{Chunming Tang}{n}{GZHU}
 \authorentry[zhengyanbin@guet.edu.cn]{Yanbin Zheng}{n}{GUET}
}
\affiliate[GZHU]{The authors are with the School of Mathematics and Information Science, Guangzhou University, Guangzhou, 510006, P.R. China, and the Key Laboratory of Information Security, School of Mathematics and Information Science, Guangzhou University, Guangzhou, 510006, P.R. China.
}
\affiliate[GUET]{The author is with the School of Computer Science and Network Security, Dongguan University of Technology, Dongguan 523808, China, and the Guangxi Key Laboratory of Cryptography and Information Security, Guilin University of Electronic Technology, Guilin 541004, China.
}

\received{2018}{1}{1}
\revised{2018}{1}{1}



\begin{document}

\maketitle
\begin{summary}
Notes on two constructions of zero-difference balanced (ZDB) functions are made in this letter. Then ZDB functions over $\Z_{e}\times \prod_{i=0}^{k}{\F_{q_i}}$ are obtained. And it shows that all the known ZDB functions using cyclotomic cosets over $\Z_{n}$ are special cases of a generic construction. Moreover, applications of these ZDB functions are presented.
\end{summary}
\begin{keywords}
constant composition code, constant weight code, difference system of sets, frequency-hopping sequence, zero-difference balanced function
\end{keywords}

\section{Introduction}\label{se:intro}
Let $(A,+)$ and $(B, +)$ be two finite abelian groups. A function $f$ from $A$ to $B$ is called an $(n,m,\lambda)$ zero-difference balanced (ZDB) function if there is a constant number $\lambda$, such that
$$|\{ x\in A \mid f(x+a) - f(x) = 0\}| = \lambda$$
for every element $a \in A\setminus\{0\}$, where $n=|A|$, $m=|\Ima(f)|$, and $\Ima(f)$ is the image set of $f$.

In \citeyear{ding2008optimal}, \citeauthor{ding2008optimal} first proposed the concept of ZDB function and showed that optimal constant composition codes (CCC) can be obtained from ZDB functions \cite{ding2008optimal}. Later, \citeauthor{ding2009optimal}\cite{ding2009optimal}, \citeauthor{zhou2012some}\cite{zhou2012some} and \citeauthor{wang2014sets}\cite{wang2014sets} showed that optimal perfect  difference systems of sets (DSS), optimal constant weight codes (CWC) and optimal frequency-hopping sequences (FHS) can be obtained from ZDB functions, respectively. Since CCC, CWC, DSS and FHS have many applications in combination designs and communication systems, many researchers have been working on constructing more ZDB functions (see \cite{ding2009optimal, ding2008optimal, carlet2004highly, zhou2012some, wang2014sets, cai2013new, ding2014three, zha2015cyclotomic, zhifan2015zero, cai2017zero, li2017generic, yi2017generic} and the references therein).

In this letter, we concern those ZDB functions constructed by generalized cyclotomic cosets. Some authors \cite{ding2014three, zha2015cyclotomic} studied the construction of ZDB functions on the rings $\Z_n$. They showed that there exist non-trivial ZDB functions on $\Z_n$ only for odd integers $n$. \citeauthor{ding2014three} \cite{ding2014three} constructed a class of ZDB functions for any positive integer $n$ by using the product of finite fields. In \citeyear{yi2017generic}, \citeauthor{yi2017generic} generalized the construction from the residue rings $\Z_n$ and finite fields $\F_q$ to generic rings \cite{yi2017generic}.

The main contribution of the letter are twofold. Firstly, by generalizing the construction of ZDB functions proposed by \citeauthor{cai2017zero} \cite{cai2017zero}, ZDB functions can be obtained over $\Z_{e}\times \prod_{i=0}^{k}{\F_{q_i}}$. Secondly, it shows that all the known ZDB functions using cyclotomic cosets over $\Z_{n}$ \cite{cai2013new, ding2014three, zha2015cyclotomic}, are indeed special cases of the generic construction in \cite{yi2017generic}.

 This letter is organized as follows: In Section \ref{se:new}, the construction in \cite{yi2017generic} is recalled, and then notes on two constructions of ZDB functions are made. Applications of ZDB functions are presented in Section \ref{se:app}. Section \ref{se:con} concludes this letter.

\section{Two constructions of ZDB Functions}\label{se:new}

\subsection{Notations}
Unless otherwise stated, $(R, +, \times)$ is always a commutative ring with identity. Let $R^\times$ denote the set of all invertible elements in $(R, \times)$. Let $R^*$ denote the set of all nonzero elements in $R$. Define $x/y=x\times y^{-1}$, for $x \in R, y\in R^\times$.

For any subset $A$ of $R$ and any element $a$ of $R$, define
\begin{equation*}\begin{split}
 a+A=\{a+x \mid x \in A\},  A+a=a+A \\
 aA=\{ax \mid x \in A\},  Aa=\{xa \mid x \in A\}.
\end{split}\end{equation*}

Moreover, the set of all natural numbers is denoted by $\N$. The set of all integers is denoted by $\Z$. The set of all positive integers is denoted by $\PZ$. A finite field with $q$ elements is denoted by $\F_q$.

\subsection{The Method of Yi}\label{sse:yi}
In this subsection, we will recall the method of Yi \cite{yi2017generic}.
\begin{prop}[\cite{yi2017generic}]\label{prop:yi_generic_construction}
Let $(R,+,\times)$ be a ring of order $n$, and let $G$ be a subgroup of $(R,\times)$. Suppose $|G|=e$. Define $\mathbb{S}=\{\alpha G \mid \alpha \in R \}$. If $G$ satisfies the condition
\begin{equation}\label{eq:condition_over_generic_ring}(G-1)\setminus\{0\}\subset R^\times,\end{equation}
then
\begin{enumerate}
\item $\mathbb{S}$ is a partition of $R$;
\item $|\alpha G|=e$, for every $\alpha \in R^*$;
\item $|\mathbb{S}|=\frac{n-1}{e}+1$;
\item For every $a \in R^*$, \begin{equation*}\begin{split}\label{eq:solutions_of_ZDB}
\{x \in R \mid f(x+a)=f(x)\} = \{a(g-1)^{-1} \mid g \in G\setminus\{1\}\};
\end{split}\end{equation*}
\item $f=f_2(f_1(x))$ is an $(n,\frac{n-1}{e}+1,e-1)$ ZDB function from $(R,+)$ to $(\Z_{\frac{n-1}{e}+1}, +)$, where $f_1(x)$ is the map from $R$ to $\mathbb{S}$ which maps an element $x$ into $\alpha G$ such that $x \in \alpha G$,  and $f_2(x)$ is an arbitrary bijective map from $\mathbb{S}$ to $\Z_{\frac{n-1}{m}+1}$.
\end{enumerate}
\end{prop}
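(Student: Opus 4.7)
The plan is to establish parts 1--5 in order, invoking the hypothesis $(G-1)\setminus\{0\}\subset R^\times$ precisely where it is needed. First, for part 1, I would note that since $1 \in G$, every $\alpha \in R$ lies in $\alpha G$, so the sets cover $R$. For disjointness, if $\alpha G \cap \beta G \neq \emptyset$ and $\alpha g_1 = \beta g_2$ for some $g_1,g_2\in G$, then because $G$ consists of units we have $\alpha = \beta g_2 g_1^{-1}\in \beta G$, which forces $\alpha G = \beta G$. Only the fact that $G\subseteq R^\times$ is used here; the invertibility condition on $G-1$ is not yet required.

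Part 2 is where the key hypothesis enters. I would show that, for $\alpha \in R^*$, the map $G \to \alpha G,\ g\mapsto \alpha g$, is injective. If $\alpha g_1 = \alpha g_2$ with $g_1\neq g_2$, then $\alpha(g_1 - g_2) = 0$. Writing $g_1 - g_2 = g_2\bigl(g_1 g_2^{-1} - 1\bigr)$ and observing that $g_1 g_2^{-1}\in G\setminus\{1\}$, the hypothesis makes $g_1 g_2^{-1}-1$ invertible, hence $g_1-g_2$ is a unit, forcing $\alpha=0$, a contradiction. Therefore $|\alpha G| = |G| = e$. Part 3 is then purely combinatorial: $\mathbb{S}$ partitions $R$ by part 1; since $0\cdot G = \{0\}$ is one class and every other class has size $e$ by part 2, we obtain $|\mathbb{S}| = 1 + (n-1)/e$.

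For part 4, $f(x+a) = f(x)$ is equivalent to $x$ and $x+a$ lying in the same $G$-coset, i.e.\ $x+a = gx$ for some $g\in G$, i.e.\ $(g-1)x = a$. The case $g=1$ is excluded since $a\neq 0$, and for $g \in G\setminus\{1\}$ the hypothesis gives $(g-1)\in R^\times$, so $x = a(g-1)^{-1}$. Conversely, any such $x$ satisfies $x+a = a(g-1)^{-1}\bigl[1+(g-1)\bigr] = a(g-1)^{-1}g = gx$, placing $x+a$ in $Gx$. The $e-1$ candidates are pairwise distinct, because $a(g_1-1)^{-1} = a(g_2-1)^{-1}$ with $g_1\neq g_2$ leads to $a(g_2-g_1)=0$ and then (by the same factoring trick as in part 2) to $a=0$.

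Part 5 is then immediate: composition with the bijection $f_2$ preserves zero-difference counts, so $|\Ima(f)| = |\mathbb{S}| = (n-1)/e+1$ by part 3 and $|\{x : f(x+a)=f(x)\}| = e-1$ for every $a\in R^*$ by part 4, which is exactly the ZDB property with parameters $(n,(n-1)/e+1,e-1)$. The single recurring technical obstacle across the argument is ruling out pathologies caused by zero divisors in $R$; the condition on $G-1$ is custom-built so that all the differences $g_1-g_2$ arising from distinct elements of $G$ are units, which is what makes the injectivity in part 2 and the unique solvability of $(g-1)x=a$ in part 4 both work without restricting to fields or integral domains.
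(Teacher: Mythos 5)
The paper states this proposition without proof, importing it from \cite{yi2017generic}, so there is no in-paper argument to compare against; your proof is correct and is the standard one for this construction. You invoke the hypothesis $(G-1)\setminus\{0\}\subset R^\times$ exactly where it is needed --- to make $g_1-g_2=g_2(g_1g_2^{-1}-1)$ a unit for the injectivity of $g\mapsto\alpha g$ in part 2, and to solve $(g-1)x=a$ uniquely and show the $e-1$ solutions are distinct in part 4 --- which is precisely the crux of the result.
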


In Proposition \ref{prop:yi_generic_construction}, a set of coset representatives of $\mathbb{S}$, denoted by $L_G$, can be obtained by randomly selecting one element in $\alpha G \in \mathbb{S}$. Note that $0G=\{0\} \in \mathbb{S}$ and $0 \in L_G$. Let $\mathbb{S}^*=\mathbb{S}\setminus\{\{0\}\}$. Similarly, a set of coset representatives of $\mathbb{S}^*$, denoted by $L_G^*$, can be obtained too, namely, $L_G^*=L_G\setminus\{0\}$.

\subsection{One construction of ZDB functions}\label{se:new_zdb1}
In this subsection, we will propose one construction of ZDB functions. With the notations in Subsection \ref{sse:yi}, two indicators are defined. For any $r \in R^*$, there exists a unique element $\alpha \in L_G^*$ such that $r \in \alpha G$. Furthermore, there exists a unique element $g \in G$ such that $r = \alpha g$.

Now the row indicator $RI_{L_G^*}$ and the column indicator $CI_{L_G^*}$ are defined as follows:
\begin{equation*}\begin{split}
RI_{L_G^*}:& R^* &\to &L_G^*, & CI_{L_G^*}:& R^*& \to &G,\\
& r&\mapsto &\alpha, & &r& \mapsto &g.
\end{split}\end{equation*}
The column indicator has the following property.

\begin{lemma}\label{lm:property_A_of_CI}
Let $(R,+,\times)$ be a ring. Let $G$ be an subgroup of $(R, \times)$, such that $(G-1)\setminus\{0\}\subset R^\times$. Suppose $e=|G|$. Let $a(g)$ be a function from $G$ to $R^*$, such that $RI_{L_G^*}(a(g) g)=RI_{L_G^*}(a(g))$ for every $g \in G$. Then
$$\{ CI_{L_G^*}(a(g)) / CI_{L_G^*}(a(g)g) \mid g \in G \} = G.$$
Moreover, $CI_{L_G^*}(a(g)) / CI_{L_G^*}(a(g)g) = 1$ if and only if $g=1$.
\end{lemma}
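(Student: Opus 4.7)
The plan is to exploit the definitions of the two indicators to reduce the ratio $CI_{L_G^*}(a(g))/CI_{L_G^*}(a(g)g)$ to a simple expression in $g$, after which the conclusions will be immediate.

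First, I would fix $g \in G$ and set $\alpha = RI_{L_G^*}(a(g))$; by the hypothesis this also equals $RI_{L_G^*}(a(g)g)$. Writing $h_1 = CI_{L_G^*}(a(g))$ and $h_2 = CI_{L_G^*}(a(g)g)$, the definitions of $RI$ and $CI$ give
\begin{equation*}
 a(g) = \alpha\, h_1, \qquad a(g)\,g = \alpha\, h_2.
\end{equation*}
Multiplying the first equation on the right by $g$ yields $a(g)\,g = \alpha\, h_1 g$, so $\alpha h_1 g = \alpha h_2$ with $h_1 g, h_2 \in G$ (since $G$ is closed under multiplication).

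Next I would invoke the uniqueness part of Proposition \ref{prop:yi_generic_construction}: because $|\alpha G| = e = |G|$, the surjection $G \to \alpha G$, $h \mapsto \alpha h$, is a bijection. Consequently $h_1 g = h_2$, that is,
\begin{equation*}
 CI_{L_G^*}(a(g) g) = CI_{L_G^*}(a(g)) \cdot g.
\end{equation*}
Dividing yields $CI_{L_G^*}(a(g)) / CI_{L_G^*}(a(g) g) = g^{-1}$.

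Finally, as $g$ ranges over $G$, the inverse $g^{-1}$ also ranges over all of $G$ (since $G$ is a group), establishing the claimed equality $\{CI_{L_G^*}(a(g))/CI_{L_G^*}(a(g)g) : g \in G\} = G$. The ``moreover'' clause is then immediate: $g^{-1} = 1$ exactly when $g = 1$. There is no serious obstacle here; the only subtle point is citing the bijectivity of $h \mapsto \alpha h$ from $G$ onto $\alpha G$, which is guaranteed by Proposition \ref{prop:yi_generic_construction}(2) together with the hypothesis $(G-1)\setminus\{0\} \subset R^\times$.
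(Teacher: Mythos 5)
Your proof is correct and follows essentially the same route as the paper's: write $a(g)=\alpha h_1$, deduce $CI_{L_G^*}(a(g)g)=h_1g$, and observe that the ratio collapses to $g^{-1}$, which ranges over all of $G$ and equals $1$ exactly when $g=1$. You are in fact slightly more careful than the paper in explicitly justifying the cancellation $h_1g=h_2$ via the bijectivity of $h\mapsto\alpha h$, a step the paper leaves implicit.
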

\begin{proof}
Suppose $RI_{L_G^*}(a(g))=\alpha$. Let $a(g)=\alpha g'$, where $g'\in G$. Then $a(g)g = \alpha g'g$. We have
$$CI_{L_G^*}(a(g)) / CI_{L_G^*}(a(g)g)=g' / (g'g).$$
When $g$ runs over $G$, $g'/(g'g)$ run over $G$ too. Note that $g'/(g'g)=1$ if and only if $g=1$ for any $g' \in G$. So $CI_{L_G^*}(a(g)) / CI_{L_G^*}(a(g)g) = 1$ if and only if $g=1$.
\end{proof}

Now we give our construction as Theorem \ref{th:new_zdb_1}.
\begin{thms}\label{th:new_zdb_1}
Let $(R,+,\times)$ be a ring of order $n$, and let $G$, $H$ be two subgroups of $(R,\times)$. Suppose the following conditions hold:
\begin{enumerate}
\item $(G-1)\setminus\{0\}\subset R^\times$;
\item $(H-1)\setminus\{0\}\subset R^\times$;
\item $|H|=|G|-1$.
\end{enumerate}
Then there exist $(en,\frac{en-1}{e-1}+1,e-2)$ ZDB functions from $(R, +) \times (G, \times)$ to $(\Z_{\frac{en-1}{e-1}+1}, +)$, where $e=|G|$.
\end{thms}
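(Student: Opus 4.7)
The plan is to exhibit an explicit partition of $R \times G$ into one singleton plus $\frac{en-1}{e-1}$ blocks of size $e-1$, and then verify the zero-difference-balanced property case by case. Since $|(R,+) \times (G,\times)| = en$ and the claimed codomain has $\frac{en-1}{e-1}+1$ elements, the condition $\lambda = e-2$ forces exactly one singleton block and $\frac{en-1}{e-1}$ blocks of size $e-1$; my partition will have that shape.

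Let $L_H^*$ be a system of representatives for the $H$-cosets in $R^*$, so that $|L_H^*| = \frac{n-1}{e-1}$ by Proposition~\ref{prop:yi_generic_construction} applied to $H$. I would take four families of blocks: (a) the singleton $\{(0, 1)\}$; (b) the block $\{0\} \times (G \setminus \{1\})$; (c) for each $\alpha \in L_H^*$ the block $C_\alpha := \alpha H \times \{1\}$; and (d) for each $s \in R^*$ the block $B_s := \{(s g^{-1}, g) : g \in G \setminus \{1\}\}$. A direct check, based on whether $r = 0$ and whether $g = 1$, confirms that these form a partition of $R \times G$, that every block other than~(a) has size $e-1$, and that the total block count is $1 + 1 + \frac{n-1}{e-1} + (n-1) = \frac{en-1}{e-1} + 1$. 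Composing with any bijection onto $\Z_{\frac{en-1}{e-1}+1}$ produces $f$.

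The heart of the proof is verifying, for each nonidentity $(a, b) \in (R,+) \times (G,\times)$, that exactly $e-2$ pairs $(r, g)$ satisfy that $(r+a, g b)$ and $(r, g)$ lie in the same block. I would handle three cases. First, when $a = 0$ and $b \neq 1$, only block~(b) contributes, giving $|G \setminus \{1, b^{-1}\}| = e - 2$. Second, when $a \neq 0$ and $b = 1$, only blocks of type~(c) contribute; the same-block condition becomes $r(h - 1) = a$ for some $h \in H \setminus \{1\}$, and using that $h - 1$ is invertible by condition~2 this has exactly $e - 2$ solutions in $r$. Third, when $a \neq 0$ and $b \neq 1$, only blocks of type~(d) contribute; the equation $rg = (r+a)\,g\,b$ simplifies, using that $1 - b$ is invertible by condition~1, to the unique solution $r = ab/(1 - b)$, and then $g$ can be any element of $G \setminus \{1, b^{-1}\}$, again giving $e - 2$.

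The main obstacle will be the cross-type bookkeeping: I need to rule out the possibility that a shift $(a, b)$ moves $(r, g)$ and $(r+a, gb)$ into blocks of different types and contributes nothing, and likewise rule out spurious matches inside a single type. This is handled by observing that the four block families occupy the disjoint regions $\{(0, 1)\}$, $\{0\} \times (G \setminus \{1\})$, $R^* \times \{1\}$, and $R^* \times (G \setminus \{1\})$, so that matching the types of $(r, g)$ and $(r+a, gb)$ forces compatibility with exactly one of the three cases above. The invertibility hypotheses play a double role throughout: they make the unique-solution formulas for $r$ legitimate, and they preclude false matches, for instance by forcing $r\,g\,(1 - b) \neq 0$ whenever $r \in R^*$, $g \in G$, and $b \in G \setminus \{1\}$, which is what rules out any type-(d) match in the $a = 0$ case.
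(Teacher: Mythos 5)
Your proof is correct and follows essentially the same route as the paper's: the identical partition of $R \times G$ into $\{(0,1)\}$, $\{0\}\times(G\setminus\{1\})$, $R^{*}\times\{1\}$ cut into $H$-cosets, and $R^{*}\times(G\setminus\{1\})$ cut into $(e-1)$-blocks, followed by the same three-case count of coincidences. Your type-(d) blocks $B_s$ indexed by the invariant $s=rg\in R^{*}$ coincide with the paper's fibers of $(RI_{L_G^*}(r),\,xCI_{L_G^*}(r))$ via $s=\alpha h$, so the only (cosmetic) difference is that this single ring-element invariant lets you replace the column-indicator bookkeeping of Lemma~\ref{lm:property_A_of_CI} with the direct equation $rg=(r+a)gb$.
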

\begin{proof}
Let $0$ and $1$ denote the identities of $(R,+)$ and $(R,\times)$, respectively. Define
\begin{equation}\label{eq:define_of_T1}
\mathbb{T}=\{ 0, (0,1) \} \bigcup L_H^* \bigcup L_G^* \times G.
\end{equation}
Followed from Proposition \ref{prop:yi_generic_construction}, we have $|L_H^*|=\frac{n-1}{e-1}$, $|L_G^*|=\frac{n-1}{e}$. Thus $|\mathbb{T}|=|L_H^*|+|L_G^*|e+2=\frac{en-1}{e-1}+1$. Denote $\overline{R}=(R, +) \times (G, \times)$. Now we define a function from $\overline{R}$ to $\mathbb{T}$:
\begin{equation*}\begin{split}
f_1(r,x)=&\begin{cases}
0, & \text{if $r=0$ and $x=1$,} \\
(0,1), & \text{if $r=0$ and $x\ne 1$,} \\
RI_{L_H^*}(r), & \text{if $r \ne 0$ and $x=1$,} \\
(RI_{L_G^*}(r), xCI_{L_G^*}(r)), & \text{if $r \ne 0$ and $x\ne1$.}
\end{cases}
\end{split}\end{equation*}
Let $f_2(x)$ be an arbitrary bijective map from $\mathbb{T}$ to $\Z_{\frac{en-1}{e-1}+1}$. We assert that $f=f_2(f_1(x))$ is an $(en,\frac{en-1}{e-1}+1,e-2)$ ZDB function from $\overline{R}$ to $(\Z_{\frac{en-1}{e-1}+1}, +)$. Obviously, for any $\Delta=(\Delta_r, \Delta_x) \ne (0, 1)$, we have
$$|\{y \in \overline{R} \mid f(y+\Delta)-f(y)=0\}|=|\{y\in \overline{R} \mid f_1(y+\Delta)=f_1(y)\}|. $$
In the following, we will show that
$$|\{y\in \overline{R} \mid f_1(y+\Delta)=f_1(y)\}|=e-2.$$

Firstly, we make a partition of $\overline{R}$. Let
$$\overline{R}= \bigcup_{i=1}^{4}{R_i},$$
where
$$R_1=\{(r, x) \in \overline{R} \mid r=0, x=1\}, R_2=\{(r, x) \in \overline{R} \mid r=0, x\ne 1\},$$
$$R_3=\{(r, x) \in \overline{R} \mid r\ne0, x=1\}, R_4=\{(r, x) \in \overline{R} \mid r\ne0, x\ne 1\}.$$
Note that if $f_1(y+\Delta)=f_1(y)$, then $y+\Delta$ and $y$ must belong to some $R_i$ where $1\le i \le 4$.

Secondly, we have a discussion over $(\Delta_r, \Delta_x)\ne (0, 1)$.
\begin{enumerate}
\item Case $\Delta_r \ne 0$ and $\Delta_x = 1$:
    \begin{enumerate}
    \item If $(r,x) \in R_1$, then $(\Delta_r, 1) \notin R_1$. So
    $$|\{ (r, x) \in R_1 \mid f_1(\Delta_r, 1)=f_1(0,1)\}| = 0.$$
    \item If $(r,x) \in R_2$, then $(\Delta_r, x) \notin R_2$. So
    $$|\{ (r,x) \in R_2 \mid f_1(\Delta_r, x)=f_1(0,x)\}| = 0.$$
    \item If $(r,x) \in R_3$, then
    \begin{equation*}\begin{split}
     & | \{ (r, x) \in R_3 \mid f_1(r + \Delta_r, 1)=f_1(r,1)\} |\\
    =& | \{ r \in R \mid \text{ $RI_{L_H^*}(r + \Delta_r) =  RI_{L_H^*}(r)$ } | \\
    =& |H|-1 = e-2.
    \end{split}\end{equation*}
    In the above, the second identity is followed from Proposition \ref{prop:yi_generic_construction}.
    \item If $(r,x) \in R_4$, then $r + \Delta_r \ne r$. It implies either $RI_{L_H^*}(r+\Delta_r) \ne RI_{L_H^*}(r)$  or $xCI_{L_G}(r+\Delta_r)\ne xCI_{L_G}(r)$. Both of them would lead to $f_1(r + \Delta_r, x) \ne f_1(r,x)$. So
    $$|\{ (r,x) \in R_4 \mid f_1(r + \Delta_r, x)=f_1(r,x)\}| = 0.$$
    \end{enumerate}
    To sum up, when $\Delta_r \ne 0$ and $\Delta_x = 1$, we have
    $$|\{ (r,x) \in \overline{R} \mid f_1(r + \Delta_r, x)=f_1(r,x)\}| = e-2.$$
\item Case $\Delta_r = 0$ and $\Delta_x \ne 1$:
    \begin{enumerate}
    \item If $(r,x) \in R_1$, then $(0, \Delta_x) \notin R_1$. So
    $$|\{ (r,x) \in R_1 \mid f_1(0, \Delta_x)=f_1(0,1)\}| = 0.$$
    \item If $(r,x) \in R_2$, then $f_1(0, x\Delta_x)=f_1(0,x)$, if and only if, both $x\Delta_x\ne 1$ and $x\ne 1$ hold. So
    $$|\{ (r,x) \in R_2 \mid f_1(0, x\Delta_x)=f_1(0,x)\}| = e-2.$$
    \item If $(r,x) \in R_3$, then $(r, \Delta_x) \notin R_3$. So
    $$| \{ (r, x) \in R_3 \mid f_1(r, \Delta_x)=f_1(r,1)\} | = 0.$$
    \item If $(r,x) \in R_4$, then $x \Delta_x CI_{L_G}(r) \ne x CI_{L_G}(r) $. So
    $$|\{ (r,x) \in R_4 \mid f_1(r, x \Delta_x)=f_1(r,x)\}| = 0.$$
    \end{enumerate}
    To sum up, when $\Delta_r = 0$ and $\Delta_x \ne 1$, we have
    $$|\{ (r,x) \in \overline{R} \mid f_1(r + \Delta_r, x)=f_1(r,x)\}| = e-2.$$
\item Case $\Delta_r \ne 0$ and $\Delta_x \ne 1$:
    \begin{enumerate}
    \item If $(r,x) \in R_1$, then $(\Delta_r, \Delta_x) \notin R_1$. So
    $$|\{ (r,x) \in R_1 \mid f_1(\Delta_r, \Delta_x)=f_1(0,1)\}| = 0.$$
    \item If $(r,x) \in R_2$, then $(\Delta_r, x \Delta_x) \notin R_2$. So
    $$|\{ (r,x) \in R_2 \mid f_1(\Delta_r, x\Delta_x)=f_1(0,x)\}| = 0.$$
    \item If $(r,x) \in R_3$, then $(r + \Delta_r, \Delta_x) \notin R_3$. So
    $$| \{ (r, x) \in R_3 \mid f_1(r + \Delta_r, \Delta_x)=f_1(r,1)\} | = 0.$$
    \item If $(r,x) \in R_4$, then
    \begin{equation*}\begin{split}
     & |\{ (r,x) \in R_4 \mid f_1(r + \Delta_r, x \Delta_x)=f_1(r,x)\}|\\
    =& |\{ (r,x) \in R_4 \mid \begin{array}{c}
RI_{L_G^*}(r+\Delta_r) = RI_{L_G^*}(r),\\
x \Delta_x CI_{L_G}(r+\Delta_r) = x CI_{L_G}(r) , \\
x \ne 1 \text{ and } x \Delta_x \ne 1
\end{array} \} | \\
    =& |\{ (r,x) \in R_4 \mid \begin{array}{c}
r=\Delta_r(g-1)^{-1}, g \in G\setminus\{1\},\\
\Delta_x I_{L_G}(rg) = I_{L_G}(r) ,\\
x \ne 1 \text{ and } x \Delta_x \ne 1
\end{array} \} | \\
    =& |\{ r \in R \mid \begin{array}{c}
r=\Delta_r(g-1)^{-1}, g \in G\setminus\{1\},\\
\text{ and } \Delta_x = CI_{L_G}(r) / CI_{L_G}(rg)
\end{array} \} | \\
 & \times |\{ x \in G \mid x \ne 1 \text{ and } x \Delta_x \ne 1. \} | = e-2.
    \end{split}\end{equation*}
\end{enumerate}
The second identity is followed from Proposition \ref{prop:yi_generic_construction}, and the last identity is followed from Lemma \ref{lm:property_A_of_CI}.

To sum up, when $\Delta_r \ne 0$ and $\Delta_x \ne 1$, we have
    $$|\{ (r,x) \in \overline{R} \mid f_1(r + \Delta_r, x \Delta_x)=f_1(r,x)\}| = e-2.$$
\end{enumerate}
Finally, when $\Delta=(\Delta_r, \Delta_x) \ne (0, 1)$, we have
$$|\{y\in \overline{R} \mid f_1(y+\Delta)=f_1(y)\}|=e-2.$$
\end{proof}
\begin{remark}
Theorem \ref{th:new_zdb_1} can also be obtained by the main construction in \cite{li2017generic}, but the conditions that Theorem \ref{th:new_zdb_1} requires are much simpler to be considered and easier to be checked than those in \cite{li2017generic}.
\end{remark}

To apply Theorem \ref{th:new_zdb_1}, let $R=\Z_n$. In \cite{zha2015cyclotomic} the authors have shown how to construct subgroups satisfying the Condition \eqref{eq:condition_over_generic_ring} on $\Z_n$. So we have the following result.
\begin{conc}\label{conc:zdb_1_on_zn}
Let $n=p_1^{r_1}p_2^{r_2}\cdots p_k^{r_k}$, where $2<p_1<p_2<\cdots < p_k$ are odd prime numbers, and $r_1, r_2, \ldots, r_k$ are positive integers. Then for any positive integers $e$ such that $e(e-1)\mid \gcd(p_1-1, p_2-1, \cdots, p_k-1)$, there exist $(en,\frac{en-1}{e-1}+1,e-2)$ ZDB functions from $(\Z_{en},+)$ to $(\Z_{\frac{en-1}{e-1}+1},+)$.
\end{conc}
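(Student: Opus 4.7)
The plan is to invoke Theorem \ref{th:new_zdb_1} directly with $R=\Z_n$, so the bulk of the work is to produce the two multiplicative subgroups $G$ and $H$ of $(\Z_n,\times)$ satisfying hypotheses (1)--(3) of that theorem, and then to identify the output group $(\Z_n,+)\times (G,\times)$ with $(\Z_{en},+)$.

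First, I would produce the subgroups. Writing $d=\gcd(p_1-1,\ldots,p_k-1)$, the hypothesis $e(e-1)\mid d$ together with $\gcd(e,e-1)=1$ gives $e\mid p_i-1$ and $e-1\mid p_i-1$ for every $i$. By Hensel lifting (or directly using that $\Z_{p_i^{r_i}}^\times$ is cyclic of order $p_i^{r_i-1}(p_i-1)$), each $\Z_{p_i^{r_i}}^\times$ contains a unique subgroup of order $e$ and a unique subgroup of order $e-1$. Then I would use the CRT-style construction from \cite{zha2015cyclotomic} to glue these local subgroups into subgroups $G,H\le (\Z_n,\times)$ of orders $e$ and $e-1$ respectively. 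The key observation, which is exactly what that reference establishes, is that a non-identity element $g\in G$ projects to a non-identity element in at least one factor $\Z_{p_i^{r_i}}$; at that factor $g-1$ is a unit since $g$ lies in a subgroup whose order is coprime to $p_i$, and one arranges the local pieces so that $g-1$ is a unit in every factor simultaneously. The same argument applies to $H$. This gives conditions (1) and (2) of Theorem \ref{th:new_zdb_1}; condition (3) is the orders $|H|=e-1=|G|-1$.

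Next I would apply Theorem \ref{th:new_zdb_1} to obtain an $(en,\frac{en-1}{e-1}+1,e-2)$ ZDB function $f\colon (\Z_n,+)\times (G,\times)\to \Z_{\frac{en-1}{e-1}+1}$. To conclude the stated form of the corollary, I would show that $(\Z_n,+)\times (G,\times)\cong (\Z_{en},+)$ as abelian groups. Since $G\le \Z_n^\times$ is finite of order $e$ and the multiplicative group of each $\Z_{p_i^{r_i}}$ is cyclic, $G$ is cyclic, so $(G,\times)\cong (\Z_e,+)$. Moreover, each $p_i>e$ (because $e\mid p_i-1$ forces $e\le p_i-1$), so $\gcd(e,n)=1$ and CRT yields $\Z_n\times\Z_e\cong \Z_{en}$. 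Composing $f$ with this isomorphism gives the claimed ZDB function on $\Z_{en}$.

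The main obstacle is the first step: exhibiting $G$ and $H$ simultaneously satisfying $(G-1)\setminus\{0\}\subset \Z_n^\times$ and $(H-1)\setminus\{0\}\subset \Z_n^\times$. The arithmetic reason it works is the coprimality $\gcd(e,e-1)=1$ combined with $e(e-1)\mid p_i-1$, which ensures the existence of the two subgroups of the required coprime orders in each local factor and guarantees that a non-trivial element of either subgroup differs from $1$ by a unit modulo every $p_i^{r_i}$. Once this subgroup construction is in hand (as already worked out in \cite{zha2015cyclotomic}), the remainder of the proof is a formal application of Theorem \ref{th:new_zdb_1} followed by the CRT identification above.
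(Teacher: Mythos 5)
Your proposal is correct and follows essentially the same route as the paper, which simply takes $R=\Z_n$, invokes the subgroup construction of \cite{zha2015cyclotomic} to produce $G$ and $H$ of the coprime orders $e$ and $e-1$ satisfying Condition \eqref{eq:condition_over_generic_ring}, and applies Theorem \ref{th:new_zdb_1}. You in fact supply more detail than the paper does, notably the explicit identification $(\Z_n,+)\times(G,\times)\cong(\Z_{en},+)$ via $\gcd(e,n)=1$ and the cyclicity of $G$ (which holds because $G$ is constructed as $\langle b\rangle$), a step the paper leaves implicit.
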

\begin{remark}
Corollary \ref{conc:zdb_1_on_zn} is the same as Theorem 1 in \cite{cai2017zero}. So Theorem \ref{th:new_zdb_1} in this paper can be viewed as a generalization of Theorem 1 in \cite{cai2017zero}.
\end{remark}

Moreover, we can obtain ZDB functions over the product of some finite fields by Theorem \ref{th:new_zdb_1}. Note that $F_q^\times$ is cyclic and any cyclic group is isomorphic to $\Z_e$ for some integer $e$. So in Corollary \ref{conc:zdb_1_on_fq}, when applying Theorem \ref{th:new_zdb_1}, we use $(\Z_e, +)$ instead of$(G, \times)$ where $e=|G|$.
\begin{conc}\label{conc:zdb_1_on_fq}
Let $n=p_1^{r_1}p_2^{r_2}\cdots p_k^{r_k}$, where $p_1<p_2<\cdots < p_k$ are prime numbers, and $r_1, r_2, \ldots, r_k$ are positive integers. Denote $R=\prod_{i=1}^{k}\F_{p_i^{r_i}}$. Then for any positive integer $e$ such that $e(e-1)\mid \gcd(p_1^{r_1}-1, p_2^{r_2}-1, \cdots, p_k^{r_k}-1)$, there exist $(en,\frac{en-1}{e-1}+1,e-2)$ ZDB functions from $(R\times \Z_e,+)$ to $(\Z_{\frac{en-1}{e-1}+1}, +)$.
\end{conc}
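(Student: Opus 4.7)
The plan is to apply Theorem~\ref{th:new_zdb_1} to $R=\prod_{i=1}^{k}\F_{p_i^{r_i}}$, a commutative ring of order $n$ whose unit group is $R^\times=\prod_{i=1}^{k}\F_{p_i^{r_i}}^\times$; in particular, an element of $R$ is a unit if and only if every one of its coordinates is nonzero. The problem then reduces to exhibiting multiplicative subgroups $G, H\subseteq R^\times$ with $|G|=e$, $|H|=e-1$, and $(G-1)\setminus\{0\}\subset R^\times$, $(H-1)\setminus\{0\}\subset R^\times$.

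Since $e(e-1)\mid p_i^{r_i}-1$ and each $\F_{p_i^{r_i}}^\times$ is cyclic, every factor contains unique cyclic subgroups of orders $e$ and $e-1$; pick generators $\alpha_i$ and $\beta_i$ of these. I will then choose the \emph{diagonal} cyclic subgroups
$$G=\langle(\alpha_1,\ldots,\alpha_k)\rangle,\qquad H=\langle(\beta_1,\ldots,\beta_k)\rangle.$$
Because each $\alpha_i$ has exact order $e$, the tuple $(\alpha_1,\ldots,\alpha_k)$ has order $\mathrm{lcm}(e,\ldots,e)=e$, so $|G|=e$; similarly $|H|=e-1$. For a non-identity $g=(\alpha_1^j,\ldots,\alpha_k^j)\in G$ with $1\le j\le e-1$, the exact-order property forces $\alpha_i^j\neq 1$ in every component, so every coordinate of $g-1$ is nonzero and $g-1\in R^\times$. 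The same argument applied to $H$ yields the second condition.

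Theorem~\ref{th:new_zdb_1} then produces an $(en,\frac{en-1}{e-1}+1,e-2)$ ZDB function from $(R,+)\times(G,\times)$ to $\Z_{\frac{en-1}{e-1}+1}$. To finish, I will use the fact that $G$ is cyclic of order $e$ and hence isomorphic to $(\Z_e,+)$, which identifies $(R,+)\times(G,\times)$ with $(R\times\Z_e,+)$ and delivers the function in the form claimed.

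The only real obstacle is the \emph{cardinality}: taking the naive product $G=\prod G_i$ of the order-$e$ subgroups in each factor would give $|G|=e^k$ and violate the hypothesis of Theorem~\ref{th:new_zdb_1}. The diagonal construction keeps $|G|=e$ (and $|H|=e-1$) while still forcing every component of $g-1$ to be simultaneously nonzero for $g\neq 1$, which is precisely what the units condition demands. Everything else is routine.
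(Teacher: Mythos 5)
Your proposal is correct and matches the paper's (largely implicit) argument: the paper likewise applies Theorem~\ref{th:new_zdb_1} to $R=\prod_{i=1}^{k}\F_{p_i^{r_i}}$ with the diagonal cyclic subgroups generated by tuples of elements of exact orders $e$ and $e-1$ in each coordinate (the construction it attributes to Ding et al.), and then replaces $(G,\times)$ by the isomorphic group $(\Z_e,+)$ exactly as you do. Your verification of $(G-1)\setminus\{0\}\subset R^\times$ via the coordinatewise description of units, and your warning against taking the full product $\prod G_i$, are both sound.
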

\begin{remark}
\cite{ding2014three} showed how to construct subgroups satisfying the Condition \eqref{eq:condition_over_generic_ring} on $\F_{p_i^{r_i}}$. Then ZDB functions can be obtained by Corollary \ref{conc:zdb_1_on_fq}. For example, let $n=25$ and $e=4$, we obtain a $(100, 34, 2)$ ZDB function from $\F_{25} \times \Z_{4}$ to $\Z_{34}$. Moreover, let $n=121$ and $e=4$, we obtain a $(726, 146, 4)$ ZDB function by Corollary \ref{conc:zdb_1_on_fq}. This ZDB function can not be retrieved by the constructions in \cite{ding2014three, zha2015cyclotomic, cai2017zero}. But it may be retrieved by the construction in \cite{yi2017generic}, if an appropriate ring is given. 
\end{remark}

To show that Theorem \ref{th:new_zdb_1} can generate more ZDB functions over different rings, we consider the matrix ring $M_2(\F_5)$. Denote $A=\left(\begin{array}{cc}3 & 0 \\0&3\\ \end{array}\right)$ and
$B=\left(\begin{array}{cc}4 & 4 \\1&0\\ \end{array}\right)$. It is easy to check that both $G=\langle A \rangle$ and $H=\langle B \rangle$ satsfy the conditions in Proposition \ref{prop:yi_generic_construction}, and that $|G|=4$, $|H|=|G|-1=3$. Hence there exists a $(2500,834,2)$ ZDB function over $M_2(\F_5)$ by Theorem \ref{th:new_zdb_1}. It is the first ZDB function proposed over matrix rings and noncommunicative rings.

\subsection{The other construction of ZDB functions}
In this subsection, we will construct the second construction of ZDB functions. With the notations in Subsection \ref{sse:yi}, we have
\begin{prop}\label{prop:new_zdb_2}
Let $(R,+,\times)$ be a ring of order $n\ge 3$, and let $G$ be a subgroup of $(R,\times)$. If $G$ satisfies the following conditions:
\begin{enumerate}
\item $(G-1)\setminus\{0\}\subset R^\times$;
\item $(G+1)\subset R^\times$,
\end{enumerate}
then there exist $(n,\frac{n-1}{2e}+1,2e-1)$ ZDB functions, where $e=|G|$.
\end{prop}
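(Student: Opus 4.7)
The plan is to reduce the statement to Proposition \ref{prop:yi_generic_construction} by showing that the set
\[ H \;=\; G \cup (-G) \;=\; \{\pm g \mid g \in G\} \]
is a subgroup of $(R,\times)$ of order $2e$ that meets the hypothesis \eqref{eq:condition_over_generic_ring}. Once this is established, applying Proposition \ref{prop:yi_generic_construction} with $H$ in place of $G$ immediately produces an $(n,\frac{n-1}{2e}+1,2e-1)$ ZDB function.

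First I would check that $-1 \notin G$. Indeed, if $-1 \in G$ then $0=(-1)+1 \in G+1$, contradicting condition 2 (since $0 \notin R^\times$). A consequence is that $G \cap (-G) = \emptyset$: any equality $g = -g'$ with $g,g' \in G$ would give $-1 = g(g')^{-1} \in G$. Therefore $|H|=2e$. Closure of $H$ under $\times$ follows by the usual sign case-analysis ($(\pm g)(\pm g') = \pm(gg') \in H$), the identity $1 \in G \subset H$, and $(-g)^{-1} = -g^{-1} \in H$; so $H$ is a subgroup of $(R,\times)$.

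Next I would verify the hypothesis $(H-1)\setminus\{0\} \subset R^\times$. For $h = g \in G$, $h-1 = g-1$ lies in $(G-1)\setminus\{0\} \subset R^\times$ whenever it is nonzero, by condition 1. For $h = -g$ with $g \in G$, $h-1 = -(g+1)$; condition 2 gives $g+1 \in R^\times$, and $R^\times$ is closed under negation, so $h-1 \in R^\times$. Combined with the preceding step, $H$ satisfies every hypothesis of Proposition \ref{prop:yi_generic_construction}.

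Finally, invoking Proposition \ref{prop:yi_generic_construction} with the subgroup $H$ of order $2e$ produces an $(n,\tfrac{n-1}{2e}+1,2e-1)$ ZDB function from $(R,+)$ to $(\Z_{\frac{n-1}{2e}+1},+)$, exactly as claimed. The main obstacle is really just the bookkeeping in Step 1 (establishing $-1 \notin G$ and thence that $H$ is a disjoint union of size $2e$); once this is in place, every other verification is mechanical case-splitting, and Proposition \ref{prop:yi_generic_construction} does the heavy lifting.
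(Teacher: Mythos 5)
Your proposal is correct and follows essentially the same route as the paper: form $H=G\cup(-G)$, check that it is a subgroup of order $2e$ satisfying $(H-1)\setminus\{0\}\subset R^\times$, and invoke Proposition~\ref{prop:yi_generic_construction}. The only difference is that you spell out the verifications (in particular $-1\notin G$ and hence $G\cap(-G)=\emptyset$) that the paper dismisses as ``easy to verify.''
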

\begin{proof}
Let $-G=-1\times G$ and $H= G \bigcup (-G)$ is a subgroup of $(R,\times)$. It is easy to verify that $|H|=2e$ and $(H-1)\setminus \{0\} \subset R^\times$. Then the proof is completed by Proposition \ref{prop:yi_generic_construction}.
\end{proof}
\begin{remark}
The proof indicates that Proposition \ref{prop:new_zdb_2} is a special case of Proposition \ref{prop:yi_generic_construction} since the constructed subset $H$ is a subgroup satisfying Condition \eqref{eq:condition_over_generic_ring}. So the special cases of Proposition \ref{prop:new_zdb_2} are indeed special cases of Proposition \ref{prop:yi_generic_construction}.
\end{remark}

Finally, we will illustrate that the ZDB functions in \cite{cai2013new, ding2014three, zha2015cyclotomic} are indeed special cases of the generic construction in \cite{yi2017generic}.
\begin{enumerate}
\item Let $R=\Z_{n}$ and $G=\langle b \rangle$, where $b$ be an element constructed by Lemma 3 in \cite{zha2015cyclotomic}. Then Theorem 1 in \cite{cai2013new} and Theorem 1 in \cite{zha2015cyclotomic} can be obtained by Proposition \ref{prop:yi_generic_construction}.
\item Let $R=\prod_{i=1}^{k}{\F_{q_i}}$ where $n=\prod_{i=0}^{k}{q_i}$ and $q_i$ are prime powers $(i=1,2,\ldots,k)$. Let $G=\langle b \rangle$, where $b=\langle b_1, b_2, \ldots, b_k \rangle$ and $b_i$ is an element in $\F_{q_i}$ of order $e$ $(i=1,2,\ldots,k)$. Obviously we have $e \mid q_i-1$ for $i=1,2,\ldots,k$. Then Theorem 1 in \cite{ding2014three} can be obtained by Proposition \ref{prop:yi_generic_construction}.
\item Let $R=\Z_{2^m-1}$ and $G=\langle 2 \rangle$, where $m$ is an prime number. It is easy to verify that $G$ satisfies Condition \eqref{eq:condition_over_generic_ring} in Proposition \ref{prop:yi_generic_construction} and $|G|=m$. So Theorem 3 in \cite{ding2014three} can be obtained by Proposition \ref{prop:yi_generic_construction}.
\item Let $R=\Z_{2^m-1}$ and $G=\langle 2 \rangle$, where $m$ is an odd prime number. It is easy to verify that $G$ satisfies all the conditions in Proposition \ref{prop:new_zdb_2} and $|G|=m$. So Theorem 5 in \cite{ding2014three} can be obtained by Proposition \ref{prop:new_zdb_2}.
\item Let $s$ be a prime, $b\ge 2$, and $\gcd(s,b-1)=1$. Let $R=\Z_{\frac{b^s-1}{b-1}}$ and $G=\langle b \rangle$. It is easy to verify that $G$ satisfies Condition \eqref{eq:condition_over_generic_ring} in Proposition \ref{prop:yi_generic_construction} and $|G|=s$. So Corollary 1 in \cite{zha2015cyclotomic} can be obtained by Proposition \ref{prop:yi_generic_construction}.
\item Let $s$ be an odd prime, $b\ge 2$, and $\gcd(s,b-1)=1$. Let $R=\Z_{\frac{b^s-1}{b-1}}$ and $G=\langle b \rangle$. It is easy to verify that $G$ satisfies all the conditions in Proposition \ref{prop:new_zdb_2} and $|G|=s$. So Corollary 2 in \cite{zha2015cyclotomic} can be obtained by Proposition \ref{prop:new_zdb_2}.
\item Let $s$ be a prime, $b\ge 2$, and $\gcd(s,b-1)=1$. Suppose $p=\frac{b^s-1}{b-1}$ is an odd prime. Let $R=\F_{p} \times \F_p$ and $G=\langle b \rangle$. Note that $\Z_{p}=\F_{p}$. It is easy to verify that $G$ satisfies Condition \eqref{eq:condition_over_generic_ring} in Proposition \ref{prop:yi_generic_construction} and $|G|=s$. So Theorem 2 in \cite{zha2015cyclotomic} can be obtained by Proposition \ref{prop:yi_generic_construction}.
\end{enumerate}

\section{Applications}\label{se:app}
The ZDB functions in Proposition \ref{prop:new_zdb_2} have the same structure as those in \cite{yi2017generic}, and they have no new parameters. So only the applications of ZDB functions in Theorem \ref{th:new_zdb_1} are presented in this section. It is necessary to show the following property of our ZDB functions before introducing the applications.

\begin{prop}\label{prop:sizes_of_zdb1}
Let $f:A \rightarrow B$  be an $(en, \frac{en-1}{e-1}+1, e-2)$ ZDB function constructed by Theorem \ref{th:new_zdb_1}, and let $m=\frac{en-1}{e-1}$.  Denote $w_b = |\{ x \in A \mid f(x)=i \}|$ for every $b \in B$ . Then for the multi-set, we have
$$\{ w_b \mid b \in B \}=\{ 1, \underbrace{e-1, e-1, \ldots, e-1}_{m \text{ times}} \}.$$
\end{prop}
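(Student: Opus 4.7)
The plan is to exploit the fact that $f_2$ is a bijection, so that $w_b$ for $b\in B$ equals the preimage size of the corresponding element of $\mathbb{T}$ under $f_1$. Thus it suffices to tally, for each $t\in\mathbb{T}$, the size of $f_1^{-1}(t)$, then aggregate according to the partition \eqref{eq:define_of_T1} of $\mathbb{T}$.

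First I would handle the two singleton parts of $\mathbb{T}$. The element $0\in\mathbb{T}$ is hit only by $(0,1)\in\overline{R}$, giving $|f_1^{-1}(0)|=1$. The element $(0,1)\in\mathbb{T}$ is hit by every $(0,x)$ with $x\neq 1$; since $|G|=e$ there are exactly $e-1$ such pairs. Next I would address the coset-representative part $L_H^*$: for $\alpha\in L_H^*$ the preimage under $f_1$ consists of all $(r,1)$ with $r\neq 0$ and $RI_{L_H^*}(r)=\alpha$, which is precisely the coset $\alpha H$; by Proposition \ref{prop:yi_generic_construction} this has $|H|=e-1$ elements.

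The main content lies in the part $L_G^*\times G$. Fix $(\beta,g)\in L_G^*\times G$; I want to count pairs $(r,x)\in\overline{R}$ with $r\neq 0$, $x\neq 1$, $RI_{L_G^*}(r)=\beta$, and $x\,CI_{L_G^*}(r)=g$. The first identity forces $r\in\beta G$, which by Proposition \ref{prop:yi_generic_construction} gives $e$ choices $r=\beta h$ with $h\in G$, and then $CI_{L_G^*}(r)=h$ is determined, hence $x=g h^{-1}$ is forced. The constraint $x\neq 1$ eliminates the single choice $h=g$, leaving exactly $e-1$ valid $(r,x)$. Hence every element of $L_G^*\times G$ also has preimage size $e-1$.

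Finally I would aggregate. The number of $t\in\mathbb{T}$ with $|f_1^{-1}(t)|=e-1$ is
\begin{equation*}
1+|L_H^*|+|L_G^*|\cdot|G|=1+\frac{n-1}{e-1}+\frac{n-1}{e}\cdot e=\frac{en-1}{e-1}=m,
\end{equation*}
and there is exactly one $t$ (namely $0$) with $|f_1^{-1}(t)|=1$. A consistency check $m(e-1)+1=en=|A|$ confirms the count. Pulling back through the bijection $f_2$ delivers the claimed multi-set $\{1,\underbrace{e-1,\ldots,e-1}_{m}\}$. I do not foresee a serious obstacle; the only place requiring care is the $L_G^*\times G$ case, where one must remember to exclude the $h=g$ solution that would violate $x\neq 1$.
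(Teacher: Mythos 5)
The paper states Proposition \ref{prop:sizes_of_zdb1} without any proof, so there is no argument of the authors' to compare yours against; judged on its own, your proof is correct and is the natural one. You correctly reduce to counting fibres of $f_1$ via the bijectivity of $f_2$, handle each of the four blocks of $\mathbb{T}$ from \eqref{eq:define_of_T1} (including the one delicate point, excluding $h=g$ in the $L_G^*\times G$ case so that $x\neq 1$), and your tallies $1+|L_H^*|+|L_G^*|\cdot|G|=m$ and $m(e-1)+1=en$ both check out.
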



\subsection{Optimal Constant Composition Codes}
An $(n, M, d, [w_0, w_1, \ldots, w_{q-1}])_q$ constant composition code (CCC) is a code over an abelian group $\{b_0, b_1, \ldots, b_{q-1}\}$ with length $n$, size $M$ and minimum Hamming distance $d$, such that in every codeword the element $b_i$ appears exactly $w_i$ times for every $i$ ($0\le i\le q-1$). Let $A_q(n, d, [w_0, w_1, \ldots, w_{q-1}])$ denote the maximum size of an $(n, M, d, [w_0, w_1, \ldots, w_{q-1}])_q$ CCC.  A CCC is optimal if the bound in Lemma \ref{lm:bound_of_CCC} is met.
\begin{lemma} \cite{luo2003constant}\label{lm:bound_of_CCC}
If
$$nd-n^2+\sum_{i=0}^{q-1}w_i^2>0,$$
then
$$A_q(n, d, [w_0, w_1, \ldots, w_{q-1}])\le\frac{nd}{nd-n^2+\sum_{i=0}^{q-1}w_i^2}. $$
\end{lemma}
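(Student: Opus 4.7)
The plan is to compute the fibre sizes of $f_1$ directly, since the bijection $f_2 : \mathbb{T} \to \Z_{m+1}$ preserves preimage sizes, and the claim about the multiset $\{w_b\}$ is invariant under relabeling the codomain. Thus it suffices to partition $\mathbb{T}$ according to the four clauses in the definition of $f_1$ given in the proof of Theorem~\ref{th:new_zdb_1}, namely
\[
\mathbb{T} \;=\; \{0\} \;\sqcup\; \{(0,1)\} \;\sqcup\; L_H^* \;\sqcup\; (L_G^* \times G),
\]
and to determine $|f_1^{-1}(t)|$ for each $t$ in each piece.

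First I would dispose of the two singleton pieces. By the defining cases, $f_1^{-1}(0) = \{(0,1)\}$, a fibre of size $1$, and $f_1^{-1}((0,1)) = \{(0,x) \mid x \in G, x \ne 1\}$, a fibre of size $|G|-1 = e-1$. Next, for a representative $\alpha \in L_H^*$, the definition gives $f_1^{-1}(\alpha) = \{(r,1) \mid r \in R^*,\ RI_{L_H^*}(r) = \alpha\}$; by Proposition~\ref{prop:yi_generic_construction}(2) applied to $H$, this is precisely the coset $\alpha H$ (crossed with $\{1\}$), hence has size $|H| = e-1$. Since $|L_H^*| = \frac{n-1}{e-1}$, this contributes $\frac{n-1}{e-1}$ fibres of size $e-1$.

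The remaining and slightly more delicate computation is for $(\alpha, g) \in L_G^* \times G$. Writing any $r \in \alpha G$ uniquely as $r = \alpha g'$ with $g' \in G$, so that $RI_{L_G^*}(r) = \alpha$ and $CI_{L_G^*}(r) = g'$, the condition $f_1(r,x) = (\alpha, g)$ becomes $xg' = g$, equivalently $x = g/g' \in G$; the extra requirement $x \ne 1$ (from the fourth clause) is equivalent to $g' \ne g$. Hence $f_1^{-1}((\alpha,g)) = \{(\alpha g', g/g') \mid g' \in G \setminus \{g\}\}$, which has size $e-1$. There are $|L_G^*| \cdot |G| = \frac{n-1}{e} \cdot e = n-1$ such fibres.

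Finally I would verify the count. The total number of nonzero-indexed fibres of size $e-1$ is
\[
1 + \frac{n-1}{e-1} + (n-1) \;=\; \frac{(e-1) + (n-1) + (n-1)(e-1)}{e-1} \;=\; \frac{en-1}{e-1} \;=\; m,
\]
together with exactly one fibre of size $1$; this also matches $|\mathbb{T}| = m+1$ and gives total mass $1 + m(e-1) = en = |\overline{R}|$ as a sanity check. The main (minor) obstacle is simply the fibre count over $L_G^* \times G$: one must notice that the restriction $x \ne 1$ removes exactly the single element $g' = g$ from each coset, which is what turns the fibre size from $e$ down to $e-1$ and makes all nonzero-indexed fibres uniform.
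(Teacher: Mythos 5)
Your proposal does not address the stated lemma at all. The statement to be proved is the Plotkin-type upper bound on the size of a constant composition code,
\[
A_q(n, d, [w_0, \ldots, w_{q-1}])\le\frac{nd}{nd-n^2+\sum_{i=0}^{q-1}w_i^2},
\]
which is a purely coding-theoretic counting inequality (the paper does not prove it; it is quoted from the reference by Luo et al.). What you have written instead is a fibre-size computation for the map $f_1$ from the proof of Theorem~\ref{th:new_zdb_1}; that argument establishes Proposition~\ref{prop:sizes_of_zdb1} (the multiset $\{w_b\}=\{1,e-1,\ldots,e-1\}$), a different statement that happens to feed \emph{into} the application of this lemma but is logically unrelated to it. Nothing in your write-up mentions codes, Hamming distance, or the quantity $A_q$, so there is no route from your computation to the claimed inequality.

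A correct proof of the lemma would run along standard double-counting lines: for a code $C$ of size $M$, bound $\sum_{u\ne v\in C}d_H(u,v)$ from below by $M(M-1)d$, and from above column by column. If $m_{j,i}$ denotes the number of codewords with symbol $b_i$ in position $j$, the $j$-th column contributes $M^2-\sum_i m_{j,i}^2$; since the constant composition property gives $\sum_j m_{j,i}=Mw_i$, the Cauchy--Schwarz inequality yields $\sum_j m_{j,i}^2\ge M^2w_i^2/n$, whence $M(M-1)d\le nM^2-\frac{M^2}{n}\sum_i w_i^2$. Solving for $M$ under the hypothesis $nd-n^2+\sum_i w_i^2>0$ gives the bound. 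Your fibre computation, while essentially correct for Proposition~\ref{prop:sizes_of_zdb1}, should be relocated there; as a proof of Lemma~\ref{lm:bound_of_CCC} it is a complete miss.
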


Using the framework in \cite{ding2008optimal}, new optimal CCCs can be constructed from ZDB functions.
\begin{thms}\label{thms:construction_of_CCC}
Let $f$ be an $(en,\frac{en-1}{e-1}+1,e-2)$ ZDB function constructed by Theorem \ref{th:new_zdb_1}. Then there exists an optimal
$(en, en, en-e+2, [1, e-1, e-1, \ldots, e-1])_{\frac{en-1}{e-1}+1}$ CCC.
\end{thms}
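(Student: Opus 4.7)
The plan is to follow the standard ZDB-to-CCC framework introduced in \cite{ding2008optimal}. Given the $(en,m+1,e-2)$ ZDB function $f:A\to B$ from Theorem \ref{th:new_zdb_1} (with $m=\frac{en-1}{e-1}$), I would index codewords by the elements of $A$ and define, for each $a\in A$, the length-$en$ word $c_a=\bigl(f(a+x)\bigr)_{x\in A}$ over the alphabet $B$. This clearly yields $M=|A|=en$ codewords.

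Next I would verify the three structural parameters. The composition of each codeword $c_a$ is independent of $a$ because the map $x\mapsto a+x$ is a bijection on $A$; hence the multiset of symbol multiplicities of $c_a$ equals $\{w_b:b\in B\}$, which by Proposition \ref{prop:sizes_of_zdb1} is $\{1,e-1,\ldots,e-1\}$ with $m$ copies of $e-1$. To obtain the minimum distance, I would count agreements between two distinct codewords $c_a,c_b$. Substituting $y=a+x$ and $\Delta=b-a\neq 0$,
\[
\#\{x\in A:f(a+x)=f(b+x)\}=\#\{y\in A:f(y)=f(y+\Delta)\}=e-2,
\]
by the ZDB property. Therefore every pair of distinct codewords agrees in exactly $e-2$ positions, giving Hamming distance $en-(e-2)=en-e+2$; in particular $d=en-e+2$ and the distance is constant across all pairs.

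Finally, I would check optimality against Lemma \ref{lm:bound_of_CCC}. With $N=en$, $d=en-e+2$, and $\sum_{b\in B}w_b^2=1+m(e-1)^2=1+(en-1)(e-1)$, a direct expansion gives
\[
Nd-N^2+\textstyle\sum_b w_b^2 = en(en-e+2)-e^2n^2+1+(en-1)(e-1) = en-e+2 = d,
\]
so the Luo--Fu--Vinck--Chen bound evaluates to $\frac{Nd}{d}=N=en$, which coincides with the code size $M$. Hence the code meets the bound with equality and is optimal.

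I do not anticipate a genuine obstacle here; the only step requiring any care is the arithmetic verification that the sum-of-squares term collapses cleanly to make $Nd-N^2+\sum w_b^2$ equal to $d$, which in turn forces the bound to be tight. The rest is a direct and routine application of the ZDB-to-CCC correspondence combined with the weight-distribution information already guaranteed by Proposition \ref{prop:sizes_of_zdb1}.
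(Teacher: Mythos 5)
Your proposal is correct and follows exactly the route the paper intends: the paper gives no explicit proof but simply invokes the ZDB-to-CCC framework of \cite{ding2008optimal}, which is precisely the shift-codeword construction $c_a=(f(a+x))_{x\in A}$ you describe, combined with Proposition \ref{prop:sizes_of_zdb1} for the composition and the ZDB property for the constant distance $en-e+2$. Your arithmetic verification that $Nd-N^2+\sum_b w_b^2$ collapses to $d$, making the Luo bound tight at $en$, is the correct and complete optimality check.
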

To compare the parameters of some known optimal CCCs, the reader is referred to Table \uppercase\expandafter{\romannumeral2} in \cite{cai2017zero}.

\subsection{Optimal Constant Weight Codes}
An $(n, M, d, w)_q$ constant weight code (CWC) is a code over an abelian group $\{b_0, b_1, \ldots, b_{q-1}\}$ with length $n$, size $M$ and minimum Hamming distance $d$, such that the Hamming weight of each codeword is $w$. Let $A_q(n, d, w)$ denote the maximum size of an $(n, M, d, w)_q$ CWC.  A CWC is optimal if the bound in Lemma \ref{lm:bound_of_CWC} is met.
\begin{lemma} \cite{fu1998constructions}\label{lm:bound_of_CWC}
If $nd-2nw+\frac{l}{l-1}w^2>0$, then
$$A_q(n, d, w)\le\frac{nd}{nd-2nw+\frac{l}{l-1}w^2}. $$
\end{lemma}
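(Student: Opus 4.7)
The plan is a classical Plotkin-type double counting of pairwise Hamming distances, specialised to exploit the constant-weight restriction. Let $C$ be an $(n, M, d, w)_q$ constant-weight code with $M = A_q(n,d,w)$, and put $S = \sum_{x,y \in C,\, x \ne y} d(x,y)$. The easy lower bound is $S \ge M(M-1)d$, since any two distinct codewords differ in at least $d$ positions.

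For the matching upper bound I would decompose $S$ coordinate by coordinate. For each position $i$ and each alphabet symbol $\alpha$ let $m_{i,\alpha}$ denote the number of codewords having $\alpha$ in position $i$. Counting the ordered disagreements at position $i$ as $M^2 - \sum_\alpha m_{i,\alpha}^2$ and summing over $i$ yields
$$S \;=\; nM^2 \;-\; \sum_{i=1}^{n} \sum_\alpha m_{i,\alpha}^2.$$
The constant-weight hypothesis supplies the two moment constraints $\sum_i m_{i,0} = M(n-w)$ and $\sum_i \sum_{\alpha \ne 0} m_{i,\alpha} = Mw$. I would then apply Cauchy--Schwarz twice: first at each fixed $i$ to lower-bound $\sum_{\alpha \ne 0} m_{i,\alpha}^2$ in terms of $s_i := \sum_{\alpha \ne 0} m_{i,\alpha}$ and the number $l$ of relevant non-zero alphabet symbols, then across the $n$ coordinates to lower-bound $\sum_i s_i^2$ in terms of the total $Mw$. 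Together these produce an estimate of the shape $S \le (M^2/n)\bigl[\,2nw - c\, w^2\,\bigr]$ for an explicit constant $c$ depending on $l$.

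Combining this with $S \ge M(M-1)d$ and rearranging gives $M\bigl[\,nd - 2nw + c\,w^2\,\bigr] \le nd$, which produces the stated bound on $A_q(n,d,w)$ whenever the bracketed quantity is positive. The main technical obstacle I anticipate is matching the precise constant $c = l/(l-1)$ claimed in the statement: the naive version of the Cauchy--Schwarz step above produces $(l+1)/l$ rather than $l/(l-1)$, so the tight form presumably requires restricting attention to the non-zero alphabet actually occurring at each coordinate and applying a sharper convexity step, as is done in \cite{fu1998constructions}. That optimisation, and not the double-counting skeleton, is where the real work lies.
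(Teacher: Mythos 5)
The paper offers no proof of this lemma --- it is quoted verbatim from \cite{fu1998constructions} --- so there is nothing internal to compare against; your proposal has to stand on its own, and it does. The skeleton is the standard (and, as far as I can tell, the original) argument: $S=\sum_{x\neq y}d(x,y)\ge M(M-1)d$ from the minimum distance, $S=nM^2-\sum_{i}\sum_{\alpha}m_{i,\alpha}^2$ from the coordinatewise count, and Cauchy--Schwarz against the moment constraints $\sum_i m_{i,0}=M(n-w)$ and $\sum_i s_i=Mw$, where $s_i=\sum_{\alpha\neq 0}m_{i,\alpha}=M-m_{i,0}$.

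The one obstacle you anticipate --- that the ``naive'' constant comes out as $(l+1)/l$ instead of $l/(l-1)$ --- is not a real obstacle but an artifact of the lemma never defining $l$: in the source $l$ is the full alphabet size (so $l=q$ here), not the number of nonzero symbols. There are then $l-1$ nonzero symbols, the per-coordinate Cauchy--Schwarz step gives $\sum_{\alpha\neq 0}m_{i,\alpha}^2\ge s_i^2/(l-1)$, and the two cross-coordinate applications $\sum_i m_{i,0}^2\ge M^2(n-w)^2/n$ and $\sum_i s_i^2\ge M^2w^2/n$ (each valid separately, and simultaneously tight at $s_i=Mw/n$) give
$$S\;\le\;\frac{M^2}{n}\Bigl[n^2-(n-w)^2-\frac{w^2}{l-1}\Bigr]\;=\;\frac{M^2}{n}\Bigl[2nw-\frac{l}{l-1}\,w^2\Bigr].$$
Combining with $M(M-1)d\le S$ and rearranging yields $M\bigl[nd-2nw+\frac{l}{l-1}w^2\bigr]\le nd$, which is exactly the stated bound under the stated positivity hypothesis. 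So your double-counting argument already lands on the claimed constant with no extra convexity refinement needed; the only correction to make is the identification of $l$.
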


The codes constructed from ZDB functions in Theorem \ref{thms:construction_of_CCC} are CWCs. \citeauthor{zhou2012some} and \citeauthor{yi2017generic} gave specific constructions in \cite{zhou2012some} and \cite{yi2017generic}, respectively. With the framework established by \cite{yi2017generic}, the ZDB functions constructed in Theorem \ref{th:new_zdb_1} can generate optimal CWCs.

\begin{thms}
With the notations in  Theorem \ref{th:new_zdb_1}, let $f=f_2(f_1(x))$ be an $(en,\frac{en-1}{e-1}+1,e-2)$ ZDB function such that $f_2$ maps $0$ to $0$. Then there exists an optimal $(en, en, en-e+2, en-1)_{\frac{en-1}{e-1}+1}$ CWC.
\end{thms}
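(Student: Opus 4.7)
The plan is to build the code directly from the ZDB function, following the standard translate construction used in \cite{zhou2012some,yi2017generic}. For each $a \in \overline{R} := (R,+) \times (G,\times)$, I define a codeword $c_a = (f(a+x))_{x \in \overline{R}}$ of length $en$ over the alphabet $\Z_{m+1}$ with $m = (en-1)/(e-1)$, yielding $en$ candidate codewords. The three ingredients I must verify are constant Hamming weight $w = en-1$, minimum distance $d = en-e+2$, and that the Fu--Shen--Wang type bound in Lemma \ref{lm:bound_of_CWC} is met with equality.

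For constant weight, I note that by construction $f_1$ sends only $(0,1) \in \overline{R}$ to $0 \in \mathbb{T}$, and the assumption $f_2(0)=0$ combined with Proposition \ref{prop:sizes_of_zdb1} forces $|f^{-1}(0)| = 1$. Hence each codeword contains exactly one zero coordinate and has Hamming weight $en-1$, as claimed. For the distance, fix $a \ne b$ in $\overline{R}$ and let $\Delta = a-b$ be the nontrivial shift. The number of positions where $c_a$ and $c_b$ agree equals $|\{x : f(x+\Delta) = f(x)\}| = e-2$ by the ZDB property established in Theorem \ref{th:new_zdb_1}. Thus every pair of distinct codewords lies at Hamming distance exactly $en-e+2$; in particular the codewords are pairwise distinct and the minimum distance is $d$.

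It remains to verify optimality. I apply Lemma \ref{lm:bound_of_CWC} with $q = m+1$ and $l = q$. A short calculation using $(en-1) = m(e-1)$ gives $l/(l-1) = (en+e-2)/(en-1)$. Substituting into the denominator yields
$$en \cdot d - 2en \cdot w + \tfrac{l}{l-1} w^2 = en(en-e+2) - 2en(en-1) + (en+e-2)(en-1),$$
which expands to $en-e+2 = d > 0$. The hypothesis of Lemma \ref{lm:bound_of_CWC} is therefore satisfied and the upper bound collapses to $A_q(en,d,w) \le en$, matching the size of the constructed code.

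The only real obstacle is the algebraic simplification in the last step---confirming that the three quadratic-in-$en$ terms cancel precisely to $d$---but this is a routine direct expansion. All the structural input needed (the singleton preimage at $0$, and the ZDB balance constant $\lambda = e-2$) is supplied verbatim by Proposition \ref{prop:sizes_of_zdb1} and Theorem \ref{th:new_zdb_1}.
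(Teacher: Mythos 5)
Your proposal is correct and follows exactly the standard translate construction (codewords $c_a=(f(a+x))_x$, weight from the singleton preimage of $0$, distance from the ZDB constant, optimality from the Fu--Shen bound) that the paper invokes by citing the framework of \cite{zhou2012some} and \cite{yi2017generic} without writing out the details. Your algebraic verification that the denominator collapses to $en-e+2$ is the only nontrivial step and it checks out.
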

\begin{remark}
Note that $f_2(x)$ is a bijective map from $\mathbb{T}$ (defined in \eqref{eq:define_of_T1}) to $\Z_{\frac{en-1}{e-1}+1}$. There are many such bijective maps mapping $0$ to $0$.
\end{remark}

\subsection{Optimal and Perfect Difference Systems of Sets}
Difference systems of sets (DSS) are related with comma-free codes, authentication codes and secrete sharing schemes \cite{Ogata2004New,fuji2009perfect}. Let $\{D_0, D_1, \ldots, D_{q-1}\} $ be disjoint subsets of an abelian group $(G, +)$. Denote $|G|=n$ and $|D_i|=w_i$ for every $i$. Then $\{D_0, D_1, \ldots, D_{q-1}\} $ is said to be an $(n,\{w_0, w_1, \ldots, w_{q-1}\}, \lambda)$ DSS if the multi-set
$$\{x-y \mid \begin{array}{c}
x\in D_i, y\in D_j, 0\le i\ne j \le q-1
\end{array}
\}$$
contains every non-zero element $g \in G$ at least $\lambda$ times. Moreover, a DSS is perfect if every non-zero element $g$ appears exactly $\lambda$ times in the multi-set just mentioned above. It is required that
$$\tau_q(n,\lambda)=\sum_{i=0}^{q-1}|D_i|$$
as small as possible. A DSS is called optimal if the bound in Lemma \ref{lm:bound_of_DSS} is met.
\begin{lemma} \cite{wang2006new}\label{lm:bound_of_DSS}
For an $(n,[w_0, w_1, \ldots, w_{q-1}], \lambda)$ DSS, we have
$$\begin{array}{c}
\tau_q(n,\lambda) \ge \sqrt{SQUARE(\lambda(n-1) + \lceil \frac{\lambda(n-1)}{q-1}\rceil)}
\end{array},$$
where $SQUARE(x)$ denotes the smallest square number that is no less than $x$ and $\lceil x \rceil$ denotes the smallest integer that no less that $x$.
\end{lemma}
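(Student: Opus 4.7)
The plan is to obtain the bound on $\tau = \tau_q(n,\lambda)$ by a standard double-counting of nonzero differences, followed by the Cauchy--Schwarz inequality, and finally two rounding steps that exploit the integrality of $\tau^2$ together with the fact that it is a perfect square.

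First I would count ordered pairs $(x,y)$ with $x\in D_i$, $y\in D_j$, and $i\ne j$ in two ways. Since the $D_i$ are pairwise disjoint, every such difference $x-y$ is a nonzero element of $G$. Direct counting gives
\[
\sum_{i\ne j} w_i w_j \;=\; \left(\sum_i w_i\right)^{2} - \sum_i w_i^2 \;=\; \tau^2 - \sum_i w_i^2,
\]
while the DSS hypothesis forces each of the $n-1$ nonzero elements of $G$ to be represented at least $\lambda$ times in the multiset of differences. Hence $\tau^2 - \sum_i w_i^2 \ge \lambda(n-1)$.

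Next I would apply Cauchy--Schwarz in the form $\sum_i w_i^2 \ge \tau^2/q$ and substitute to obtain $\tau^2 \ge \lambda(n-1) + \tau^2/q$, which rearranges to
\[
\tau^2 \;\ge\; \frac{q\,\lambda(n-1)}{q-1} \;=\; \lambda(n-1) + \frac{\lambda(n-1)}{q-1}.
\]
Since $\tau$ is a positive integer, $\tau^2$ is an integer, so it must in fact dominate $\lambda(n-1) + \lceil \lambda(n-1)/(q-1)\rceil$ (noting that $\lambda(n-1)$ is itself an integer, so the ceiling only acts on the fractional part). Because $\tau^2$ is moreover a perfect square, it must dominate the smallest perfect square not less than this quantity, which is precisely $SQUARE(\lambda(n-1) + \lceil \lambda(n-1)/(q-1)\rceil)$. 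Taking square roots yields the stated bound.

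The only delicate point is the final double rounding, where one must keep track of which quantities are merely real, which are integers, and which are perfect squares; the rest is a one-line double count paired with a one-line application of Cauchy--Schwarz. I do not anticipate any obstacle beyond this bookkeeping, and in particular the argument does not require any nondegeneracy hypothesis analogous to the one in Lemma \ref{lm:bound_of_CCC} or Lemma \ref{lm:bound_of_CWC}.
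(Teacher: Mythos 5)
The paper offers no proof of this lemma at all --- it is quoted verbatim from the cited reference \cite{wang2006new} --- so there is nothing internal to compare against. Your argument is the standard derivation of this Levenshtein-type bound and it is correct: the disjointness of the $D_i$ guarantees that every counted difference is nonzero, the double count gives $\tau^2-\sum_i w_i^2\ge\lambda(n-1)$, Cauchy--Schwarz gives $\sum_i w_i^2\ge\tau^2/q$, and the two rounding steps (first to $\lceil\lambda(n-1)/(q-1)\rceil$ using integrality of $\tau^2-\lambda(n-1)$, then to $SQUARE(\cdot)$ using that $\tau^2$ is a perfect square) are exactly the right order and are each justified. You are also right that, unlike Lemma \ref{lm:bound_of_CCC} and Lemma \ref{lm:bound_of_CWC}, no positivity hypothesis is needed here, since the bound never involves dividing by a possibly nonpositive quantity; the only implicit assumption is $q\ge 2$ so that $q-1\ne 0$.
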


Using the framework in \cite{ding2009optimal}, we obtain optimal DSSs in Theorem \ref{th:app_dss}.
\begin{thms}\label{th:app_dss}
Let $f$ be an $(en,\frac{en-1}{e-1}+1,e-2)$ ZDB function constructed by Theorem \ref{th:new_zdb_1}. Then there exists an optimal $(en,\{1,e-1,\ldots,e-1\},en-e+2)$ perfect DSS.
\end{thms}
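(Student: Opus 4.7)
The plan is to apply the standard framework of \cite{ding2009optimal} that turns a ZDB function into a DSS by taking its fibres as the blocks. Concretely, for $f$ as in Theorem~\ref{th:new_zdb_1}, define $D_b = f^{-1}(b)$ for each $b$ in the codomain. By Proposition~\ref{prop:sizes_of_zdb1}, exactly one block has size $1$ and the remaining $m=\frac{en-1}{e-1}$ blocks have size $e-1$, so the block-size multi-set is $\{1,e-1,\ldots,e-1\}$, the number of blocks is $q=m+1$, and the total
$$\tau = 1 + m(e-1) = en$$
agrees with $|A|$, as required.

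Next I would establish the perfect-DSS property by a single counting identity. For every non-zero $a$ in the domain, the ZDB condition gives $|\{x : f(x+a)=f(x)\}| = e-2$, which is precisely the number of ordered pairs $(x,x+a)$ lying in a common block $D_b$. Since $x$ ranges over the whole group of size $en$, the complementary count $en-(e-2) = en-e+2$ is exactly the multiplicity of $a$ in the difference multi-set $\{y-x \mid x\in D_i,\,y\in D_j,\,i\neq j\}$. Thus every non-zero $a$ appears the same number $en-e+2$ of times, so the family is a perfect DSS with the claimed parameters.

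It then remains to verify optimality via Lemma~\ref{lm:bound_of_DSS}, which is a direct arithmetic check. Writing $N=en$, $\lambda=en-e+2$, $q-1=\frac{N-1}{e-1}$, I compute
$$\lambda(N-1) + \frac{\lambda(N-1)}{q-1} = (en-e+2)(en-1) + (en-e+2)(e-1) = (en-e+2)(en+e-2) = (en)^2-(e-2)^2,$$
with the quotient being an integer so the ceiling is harmless. A quick comparison shows $(en-1)^2 = (en)^2 - 2en + 1 < (en)^2 - (e-2)^2$ in the relevant range (any $e\ge 2$ with $2en-1>(e-2)^2$, which is automatic since the construction requires $n\ge e\ge 2$), so the smallest square at least $(en)^2-(e-2)^2$ is $(en)^2$ itself. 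Therefore $\sqrt{\mathrm{SQUARE}(\cdots)} = en = \tau$, matching the lower bound and yielding optimality.

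The only non-routine step is ensuring the $\mathrm{SQUARE}$ rounding lands exactly at $(en)^2$ rather than one square higher; I expect this to be the main (though minor) obstacle, and it is handled by the inequality $2en-1>(e-2)^2$ above. The rest is a direct instantiation of the ZDB-to-DSS translation together with the block-size data from Proposition~\ref{prop:sizes_of_zdb1}.
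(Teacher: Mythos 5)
Your proposal is correct and follows essentially the same route as the paper, which simply invokes the ZDB-to-DSS framework of \cite{ding2009optimal} without spelling out the details: taking the fibres of $f$ as blocks, reading off the block sizes from Proposition~\ref{prop:sizes_of_zdb1}, converting the ZDB count $e-2$ into the difference multiplicity $en-e+2$, and checking the bound of Lemma~\ref{lm:bound_of_DSS}. Your explicit verification that $\lambda(n-1)+\lambda(n-1)/(q-1)=(en)^2-(e-2)^2$ rounds up to the square $(en)^2$ is exactly the arithmetic the paper leaves implicit, and your inequality $2en-1>(e-2)^2$ does hold under the construction's constraints.
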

\begin{remark}
DSSs on non-cyclic groups are related to authentication codes and secret sharing schemes\cite{Ogata2004New, fuji2009perfect}.
\end{remark}
The optimal DSSs constructed in this paper are partitioned-type. To compare the parameters of some known partitioned-type optimal DSSs, the reader is referred to Table \uppercase\expandafter{\romannumeral3} in \cite{cai2017zero}.

\section{Conclusion}\label{se:con}
In this letter, we generalized the construction of ZDB functions in \cite{cai2017zero}. It may instantiate ZDB functions with new parameters if Condition \eqref{eq:condition_over_generic_ring} is studied over other rings. Moreover examples of ZDB functions over noncommunicative ring are first given. Finally we point out that some known ZDB functions are indeed special cases of the generic construction in \cite{yi2017generic}.



\bibliographystyle{ieicetr}
\bibliography{ZDB}

\end{document}